\def\3nab{\tilde{\nabla}}
\def\be {\begin{equation}}
\def\ee {\end{equation}}
\def\ba {\begin{eqnarray}}
\def\ea {\end{eqnarray}}
\newcommand{\bra}[1]{\left(#1\right)}
\newcommand{\brac}[1]{\left\{#1\right\}}
\newcommand{\E}{{\mathcal E}}
\renewcommand{\H}{{\mathcal H}}
\newcommand{\barray}{\begin{array}}
\newcommand{\earray}{\end{array}}
\newcommand{\e}{e}
\newcommand{\udot}{{\mathcal A}}
\begin{document}
\title{Constructing black hole entropy from gravitational collapse}
\author{Acquaviva, Giovanni $^1$, Ellis, George F. R. $^2$, Goswami, Rituparno $^3$ and Hamid, Aymen I. M.$^3$ $^*$}

\address{$^1$Department of Mathematical Sciences, University of Zululand, Private Bag X1001, Kwa-Dlangezwa 3886, South Africa.\\ $^2$Department of Mathematics and Applied Mathematics and ACGC, University of Cape Town,
Cape Town, Western Cape, South Africa.
\\$^3$Astrophysics and Cosmology Research Unit, School of Mathematics, Statistics and Computer Science, University of KwaZulu-Natal, Private Bag X54001, Durban 4000, South Africa.\\
$^1$E-mail:acquavivag@unizulu.ac.za,$^2$ george.ellis@uct.ac.zageorge.ellis@uct.ac.za$^{3}$Goswami@ukzn.ac.za $^*$aymanimh@gmail.com\footnote{Permanent address: Physics Department, Faculty of Science, University of Khartoum, Sudan} }

\begin{abstract}
Based on a recent proposal for the gravitational entropy of free gravitational fields, we investigate the thermodynamic properties of black hole formation through gravitational collapse in the framework of the semitetrad 1+1+2 covariant formalism.  In the simplest case of an Oppenheimer-Snyder-Datt collapse we prove that the change in gravitational entropy outside a collapsing body is related to the variation of the surface area of the body itself, even before the formation of horizons.  As a result, we are able to relate the Bekenstein-Hawking entropy of the black hole endstate to the variation of the vacuum gravitational entropy outside the collapsing body.
\end{abstract}

\keywords{Gravitational collapse; cosmic censorship conjecture.}

\bodymatter


\section{Introduction}
Black hole entropy in the case of eternal black holes (the maximally extended Schwarzschild vacuum solution) is a very well understood subject since the pioneering work of Bekenstein \cite{Bek73} and of Bardeen, Carter, and Hawking \cite{BarCarHaw73,Haw75}, see Ref.~\refcite{Pag05}  for a review. However astrophysical black holes form in a dynamic way. Entropy is not so well understood in that context. \\
In the context of astrophysical formation of  black holes, a key question arises.  We know that astrophysical black holes are not eternal in the past:  they are created by the continual gravitational collapse of massive stars. Therefore the question is,
\begin{quote}
\textsc{Question:} {\em Should black hole entropy be only a property of the black hole event  horizon, manifesting suddenly as the horizon forms, or should  it be an artefact of a time varying gravitational field due to gravitational collapse, 
with gravitational entropy changing smoothly from initial values to the canonical value $S_{BH} = A/4$ as an event horizon comes into being when the stellar surface area $r$ crosses the value $r = 2M$?}
\end{quote}
We use here the measure of gravitational entropy proposed in Ref.~\refcite{Clifton:2013dha}, that uses the square root of the Bel-Robinson tensor, which was shown to be unique for spacetimes which are of Petrov type D or N. This measure of gravitational entropy for free gravitational field has all the important requirements that a measure of entropy should have. It is strictly non-negative and vanishes only for conformally flat spacetimes where the Weyl tensor is zero. It measures the local anisotropy of the free gravitational field and increases monotonically as structures forms in the early universe. Most importantly, this measure reproduces the Bekenstein-Hawking entropy of a black hole: the famous theorem states that the black hole entropy at any time slice is proportional to the surface area of the black hole, which is the 2-dimensional intersection of the black hole horizon and the constant time slice \cite{Bek73,Haw75,BarCarHaw73}. Through this definition of black hole entropy, one can naturally develop the concepts of black hole thermodynamics in both classical and semiclassical regimes, leading to quantum particle creations and Hawking radiation\cite{Gibbons:1977mu}. To investigate the question stated above, in the light of the gravitational entropy proposal of Ref.~\refcite{Clifton:2013dha}, in this paper we consider the simplest example of black hole formation by Oppenheimer-Snyder-Datt \cite{osd,datt} collapse, which describes the gravitational collapse of a spherical dustlike star immersed in a Schwarzschild vacuum. Since the exterior of the star is of Petrov type D, we can uniquely determine the entropy of the free gravitational field for a static observer even when no event horizon exists. We explicitly prove that the Bekenstein-Hawking entropy of the black hole, which is formed after an infinite time for the static observer, can be linked to the net monotonic increase in the entropy of the free gravitational field during this dynamic gravitational  collapse. This result 
relates the time varying gravitational field during the continuous gravitational collapse 
to the thermodynamic property of the final state, the black hole, where gravitational entropy is well understood \cite{Pag05}.
\section{Semi-Tetrad covariant formalisms for LRS-II}
We use the 1+1+2 formalism which is an extension to the 1+3 (where there is preferred timelike vector $u^a$) formalism \cite{EllisCovariant}\cite{Ellisbook}. In this formalism, the choice of a second preferred vector along the spatial direction $e^a$ orthogonal to $u^a$ produces another split of the spacetime: this allows any 3-vector to be irreducibly split into a scalar, which is the part of the vector parallel to $\e^a$, and a vector, lying in the 2-surface orthogonal to $\e^a$. 
We apply this formalism to {\it locally rotationally symmetric} (LRS) spacetime \cite{EllisLRS}.  
The variables that uniquely describe an LRS spacetime are
$\brac{\udot, \Theta,\phi, \xi, \Sigma,\Omega, \E, \H, \mu, p, \Pi, Q }$.  Within the LRS class, the LRS-II admits spherically symmetric solutions, is free of rotation and is described by the variables $\brac{\udot, \Theta,\phi, \Sigma,\E, \mu, p, \Pi, Q }$, since $\Omega$, $ \xi $ and $ \H $ all vanish. These spacetimes include Schwarzschild, Robertson-Walker, Lema\^{i}tre-Tolman-Bondi (LTB), and Kottler spacetimes.
\section{Thermodynamics of a gravitational field}
The second law of thermodynamics can be written using the thermodynamically motivated proposal for the gravitational entropy measure \cite{Clifton:2013dha} as
\begin{equation}\label{dels}
 \delta s_{grav}=\frac{\delta\bra{\mu_{grav} v}}{T_{grav}}.
\end{equation}
where $s_{grav}$, $T_{grav}$ and $\mu_{grav}$ represent the gravitational entropy density, effective temperature, and the energy density of the free gravitational field respectively. $v$ is the spatial volume. We integrate eq.(\ref{dels}) over a spacelike hypersurface. The last ingredient needed is a definition for the temperature of the gravitational field.
For the gravitational temperature we follow the proposal of Ref.~\refcite{Clifton:2013dha} which 
can be represented in the 1+1+2 decomposition as $T_{grav}=\frac{|\udot+\frac{1}{3}\Theta+\Sigma|}{2\pi}$.
\subsection{Gravitational entropy and structure formation}
We have already stated before, the square root of Bel-Robinson tensor being the measure of gravitational entropy,  enables structure formation naturally as the entropy increases as the structure (or inhomogeneities) forms in the universe. We drive the following relation between the energy density of the gravitational field and the Misner-Sharp mass 
$\mu_{grav}=\frac{\alpha K^{3/2}}{6\pi\phi}\left|\bra{\hat{{\cal M}}_{ms}-\frac32\phi{\cal M}_{ms}}\right|\;$.
Now from LRS-II field equations, we can easily see that for a homogeneous distribution of perfect fluid with $\hat{\mu}=\hat{p}=0$ we have $\hat{{\cal M}}_{ms}-\frac32\phi{\cal M}_{ms}=0$ on every constant time slice and hence $ \delta s_{grav}=0$. However as discussed in Ref.~\refcite{Clifton:2013dha}, if we start with an inhomogeneous distribution of collapsing matter (as it happens during structure formation), we have $\hat{{\cal M}}_{ms}-\frac32\phi{\cal M}_{ms}\ne0$. This will then make $d\mu_{grav}>0$ and hence $dS_{grav}>0$. Thus the thermodynamics of free gravity naturally favours structure formation, in contrast with the thermodynamics of standard matter that favours dispersion. In the light of above discussion we can predict that the vacuum gravitational entropy outside a collapsing star (integrated over each constant time slice) will increase with time, favouring the process of continual gravitational collapse. This we prove explicitly in the next section.
\section{Gravitational entropy of the vacuum around a collapsing star}
Having all the ingredients at hand, we want to look now at the variation of gravitational entropy outside a body which is collapsing to form a black hole in the simplest scenario, the Oppenheimer-Snyder-Datt \cite{osd,datt} dust collapse model, represented schematically in Fig. \ref{collapse}. 
The interior of the collapsing star is described by a FLRW spacetime matched with the exterior vacuum solution represented by Schwarzschild spacetime. Though in general the entropy of the spacetime has contributions coming both from the matter and the gravitational field, but in the interior spacetime the matter entropy only contributes, since $\E=0$ for FLRW spacetimes while in the exterior vacuum only the gravitational part of the entropy does not vanish. 
Being interested in the variation of gravitational entropy from the point of view of an external static observer, we choose to integrate eq.(\ref{dels}) over a spacelike hypersurface {\it outside} the collapsing body.  Based on the assumptions stated above, in this scenario we are able to show the following:
\begin{prop}\label{propos1}
 The increase in the instantaneous gravitational entropy outside a collapsing star during a given interval of time is proportional to the change in the surface area of the star during that interval.
\end{prop}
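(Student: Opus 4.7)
The plan is to compute the total gravitational entropy $S_{grav}(t) = \int_{r_S(t)}^{r_{out}}(\mu_{grav}/T_{grav})\,dV$ on a constant-Schwarzschild-time spacelike slice of the exterior, then differentiate with respect to $t$ and show that the result reduces to a pure boundary contribution at the moving stellar surface $r = r_S(t)$.

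First I would specialize the LRS-II variables to the Schwarzschild vacuum. Since the Misner-Sharp mass is the constant $M$, one has $\hat{\mathcal{M}}_{ms}=0$, so the combination in $\mu_{grav}$ collapses to $|\hat{\mathcal{M}}_{ms}-\tfrac{3}{2}\phi\mathcal{M}_{ms}|=\tfrac{3}{2}\phi M$. Using the standard LRS-II values $K=1/r^{2}$ for the Gaussian curvature of the $2$-sheets and $\phi=(2/r)\sqrt{1-2M/r}$ for the sheet expansion, $\mu_{grav}$ reduces to a function of $M$ and $r$ alone (the $\phi$ factors cancel). Next, I would evaluate $T_{grav}$ for the external static observer: staticity forces $\Theta=0$ and $\Sigma=0$, so only the magnitude of the radial acceleration $\udot$ survives in $T_{grav}=|\udot+\tfrac{1}{3}\Theta+\Sigma|/(2\pi)$, and in Schwarzschild this is the familiar $M/(r^{2}\sqrt{1-2M/r})$.

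Combining these with the proper volume element $dV = 4\pi r^{2}dr/\sqrt{1-2M/r}$ on the slice, I expect every $\sqrt{1-2M/r}$ factor to cancel among $\mu_{grav}$, $T_{grav}$ and $dV$, leaving an integrand proportional to $r\,dr$. Integrating from the stellar radius $r_S(t)$ to a fixed outer boundary $r_{out}$ then gives $S_{grav}(t)$ as a quadratic polynomial in $r_S(t)$. Applying Leibniz's rule in $t$ kills the time-independent outer limit and produces a single boundary term proportional to $r_S\,dr_S/dt$. Since $A(t)=4\pi r_S(t)^{2}$ obeys $dA/dt = 8\pi r_S\,dr_S/dt$, the time derivative of $S_{grav}$ is manifestly proportional to $dA/dt$, and integrating over the time interval yields the proposition. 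The sign is consistent: during collapse $dr_S/dt<0$, so $S_{grav}$ grows while $A$ shrinks.

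The main obstacle I anticipate is the nontrivial cancellation of the $\sqrt{1-2M/r}$ factors between $\mu_{grav}$, $T_{grav}$ and $dV$: without it the integrand would be singular at the horizon and no clean boundary-term structure would emerge. Care is also needed with the absolute-value conventions in the entropy density and temperature formulas, and with correctly identifying the LRS-II acceleration scalar of a static Schwarzschild observer. As a sanity check, the proportionality constant emerging from the calculation should be fixed (via the overall normalisation $\alpha$ in $\mu_{grav}$) so as to recover the Bekenstein-Hawking coefficient $1/4$, since this is precisely what the authors will need in order to tie $\Delta S_{grav}$ to the entropy of the black hole endstate.
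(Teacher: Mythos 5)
Your proposal is correct and follows essentially the same route as the paper: specializing $\mu_{grav}$ and $T_{grav}$ to the static Schwarzschild exterior, observing that all $\sqrt{1-2m/r}$ factors cancel so the entropy integrand reduces to $2\pi\alpha\,\bar R\,d\bar R$, and extracting the time variation as a boundary term at the stellar surface giving $\delta S_{grav}=\frac{\alpha}{4}\bigl(A(\tau_0)-A(\tau)\bigr)$. The only cosmetic differences are that the paper integrates to infinity (handling the divergence by taking differences, with a remark on ``finite infinity'') rather than to a finite $r_{out}$, and subtracts the two integrals directly rather than differentiating via Leibniz's rule.
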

\begin{proof}
Outside the collapsing star the spacetime is Schwarzschild.  Therefore, by using the gravitational energy density and the temperature definition in eq.(\ref{dels}) and integrating over a 3-volume of the exterior region at fixed time, the total entropy at a given time can be expressed as
\be\label{entropyint}
S_{grav}\equiv\int_{\sigma}\, \delta s_{grav} = \pi \alpha \int_{R(\tau_0)}^{\infty}\frac{|\E|}{\udot}\ \frac{\bar{R}^2}{\sqrt{1-\frac{2m}{\bar{R}}}} d\bar{R},
\ee
where we have used $v=u^a\eta_{abcd}dx^b dx^c dx^d$, and the timelike vector $u^a$ is given by
$u^a=\bra{1/\sqrt{|1-\frac{2m}{R}|},0,0,0}$, $R(\tau_0)$ is the radius of the collapsing star at time $\tau_0$ and $m$ is the total mass of the star.
We know that for Schwarzschild spacetime $\E$ and $\udot$ are given by \cite{Betschart:2004uu} $|\E|=\frac{2m}{R^3},\quad \udot=\frac{m}{R^2}\bra{1-\frac{2m}{R}}^{-1/2}$.
Using the last two expressions in eq.(\ref{entropyint}), the gravitational entropy is then given by $S_{grav}=2 \pi \alpha \int_{R(\tau_0)}^{\infty}\bar{R}d\bar{R}\ .$
This is an infinite quantity, (although this can be made finite by using the idea of {\em Finite-Infinity} for a realistic astrophysical star for which spacetime is almost Minkowski at a distance of one light year). However if we calculate the change in the gravitational entropy in a time interval $(\tau-\tau_0)$ we obtain
\be
\delta S_{grav}|_{(\tau-\tau_0)}=\frac{\alpha}{4}\Big(A(\tau_0)-A(\tau)\Big),\label{deltas}
\ee
where $A(\tau)$ is the surface area of the star at any time $\tau>\tau_0$. 
\end{proof}
\begin{figure}[ht!]
 \begin{center}
 \includegraphics[width=0.8\textwidth]{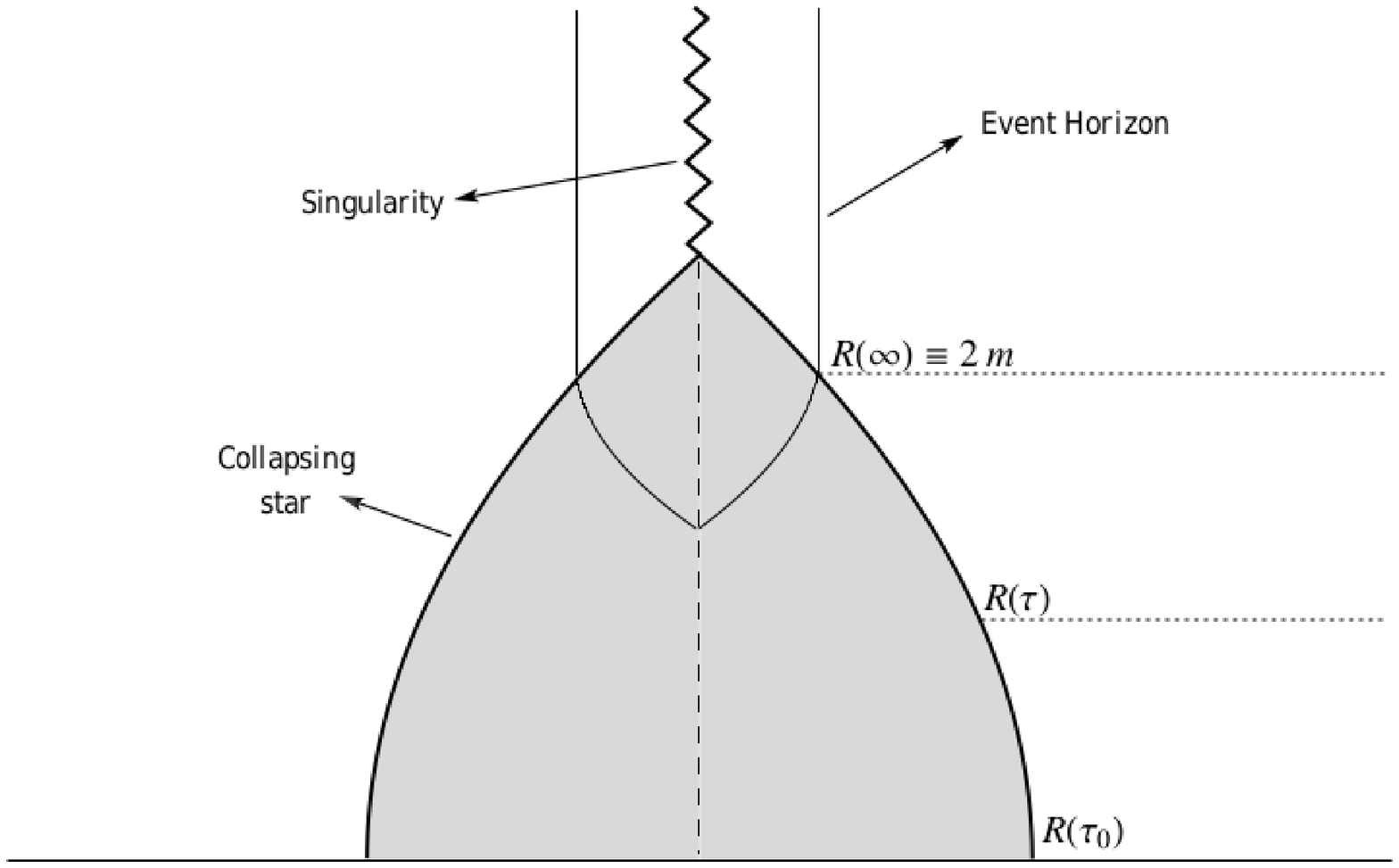}
 \caption{\label{collapse} Oppenheimer-Snyder dust collapse of a star (shaded).  In the reference frame of a static external observer, the crossing of the star's surface with the horizon at radius $2m$ occurs at $\tau\rightarrow\infty.$}
\end{center}
\end{figure}
The value of the parameter $\alpha$ can be constrained if we consider the variation of entropy between a configuration with $R(\tau_{\epsilon})=2m+\epsilon$ (with $\epsilon\ll2m$) and the asymptotic black hole state with $R(\infty)=2m$.  The time elapsed between these spatially neighboring states is actually infinite, because the formation of a black hole as a result of the collapse from the point of view of a static external observer is a process that takes an infinite amount of time.  Eq.(\ref{deltas}) gives $ \delta S_{grav}|_{(\infty-\tau_{\epsilon})}=\frac{\alpha}{4}\Big(A(\tau_{\epsilon})-A_H\Big) \simeq\alpha\, 4\pi\, m\, \epsilon$
where $\simeq$ means that we are neglecting higher orders in $\epsilon$.  The energy/mass supplied by this final stage of collapse to form the black hole is $dU\simeq\epsilon/2$, so that from the above we can be rewrite $dS\simeq\alpha\, (8\pi m)\ dU$.
The term in round brackets is the Hawking temperature $T_H=(8\pi m)^{-1}$ and hence, if $\alpha=1$, we recover the second law of black hole thermodynamics \cite{Bardeen:1973gs}.  The same value for $\alpha$ was found in Ref.~\refcite{Clifton:2013dha} by calculating the instantaneous gravitational entropy of a black hole and comparing the result with the known Bekenstein-Hawking value.

\subsection{Building up the Bekenstein-Hawking entropy}

As a consequence of Proposition \ref{propos1}, we write the following corollary

\begin{corollary}
 The Bekenstein-Hawking entropy of a black hole, formed as an endstate of a spherically symmetric collapse of a massive star with Schwarzschild spacetime as the exterior, is the difference between one fourth of the initial area of the collapsing star and the net increase in the vacuum entropy in infinite collapsing time.
\end{corollary}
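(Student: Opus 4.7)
The plan is to obtain the corollary as a direct rearrangement of the formula in Proposition \ref{propos1}, evaluated between the initial time $\tau_0$ and the asymptotic late-time limit $\tau\to\infty$, using the value $\alpha=1$ that was already fixed by comparison with the Hawking temperature in the discussion following eq.(\ref{deltas}).

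First I would apply Proposition \ref{propos1} over the full collapse history seen by the static external observer, writing
\begin{equation*}
\delta S_{grav}\big|_{(\infty-\tau_0)} \;=\; \frac{1}{4}\bigl(A(\tau_0)-A(\infty)\bigr).
\end{equation*}
Next I would identify $A(\infty)$: from the point of view of the external static observer, the stellar surface asymptotically approaches the Schwarzschild radius $R=2m$ as $\tau\to\infty$, so $A(\infty)=4\pi(2m)^2=A_H$, the area of the horizon of the endstate black hole. This is the only place where the specific geometric feature of the Oppenheimer-Snyder-Datt model enters, and it is precisely the statement that the endstate is a Schwarzschild black hole of mass $m$.

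Finally, I would rearrange the equation to isolate $A_H/4$ and invoke the Bekenstein-Hawking formula $S_{BH}=A_H/4$ for the endstate, obtaining
\begin{equation*}
S_{BH} \;=\; \frac{A(\tau_0)}{4} \;-\; \delta S_{grav}\big|_{(\infty-\tau_0)},
\end{equation*}
which is exactly the statement of the corollary. I do not foresee a substantive obstacle: the corollary is essentially an accounting identity built on top of Proposition \ref{propos1}, and the only subtle point is the interpretation of the $\tau\to\infty$ limit, namely that although the proper time to horizon formation is infinite for the static observer, the area $A(\tau)$ converges monotonically to the finite value $A_H$, so the cumulative gravitational entropy change on the right-hand side is well-defined and finite.
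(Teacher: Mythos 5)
Your proposal is correct and follows essentially the same route as the paper: apply Proposition \ref{propos1} with $\alpha=1$ over the interval from $\tau_0$ to the infinite-time limit, identify the asymptotic stellar area with the horizon area $A_H$, and invoke $S_{BH}=\frac{1}{4}A_H$ to rearrange into $S_{BH}=\frac{1}{4}A(\tau_0)-\delta S_{grav}|_{(\infty-\tau_0)}$. Your added remark that $A(\tau)$ converges monotonically to the finite value $A_H$ even though the elapsed time is infinite is a slight elaboration of the paper's one-line argument, but not a different method.
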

\begin{proof}
In the reference frame of an external static observer, the crossing between the collapsing star's surface and the horizion (and hence the formation of the black hole) will take an infinite time.  Assuming that for the asymptotic black hole endstate the Bekenstein-Hawking relation $S_{BH}=\frac{1}{4}A_H$ holds, where $A_H$ is the area of the event horizon, then from Proposition \ref{propos1} with $\alpha=1$ we have $S_{BH}=\frac{1}{4}A(\tau_0)- \delta S_{grav}|_{(\infty-\tau_0)}\ $.
\end{proof}
No structure can form spontaneously. But in fact order does indeed spontaneously form on large scales as the universe expands - an apparent contradiction with the second law \cite{Ellis:95}. In order to resolve this, one needs a good definition of gravitational entropy.
The definition given in Ref.~\refcite{Clifton:2013dha}, where (following Penrose' suggestions) gravitational entropy is based in the properties of the Weyl tensor, resolves this issue as far as the growth of perturbations in the expanding universe, due to gravitational attraction, is concerned (see equations (54) and (55) in Ref.~\refcite{Clifton:2013dha}). The present paper has shown that that initial growth of gravitational entropy, taking place in conjunction with the initial formation of structure in the expanding universe, can be smoothly joined on to the formation of black holes. The famous black hole entropy does not suddenly appear when the event horizon is formed; it grows steadily as gravitational attraction causes ever more concentrated objects to form, eventually leading to the existence of black holes with the standard gravitational entropy.

\section*{acknowledgments}
GA is thankful to the Astrophysics and Cosmology Research Unit (ACRU) at the University of KwaZulu-Natal for the kind hospitality.  AH,  RG and GFRE are supported by National Research Foundation (NRF), South Africa.

\end{document}